\documentclass[letterpaper, 10 pt, conference]{ieeeconf}   

\IEEEoverridecommandlockouts                              
\overrideIEEEmargins                                      

\usepackage{mathtools}
\usepackage{enumerate}
\usepackage{amsmath}
\usepackage{amssymb}
\usepackage{hyperref}
\usepackage{graphicx}
\usepackage{booktabs}
\usepackage[english]{babel}
\usepackage{xcolor, soul}
\usepackage{bm}
\usepackage{gensymb}
\usepackage{etoolbox} 
\usepackage{verbatim}

\newtheorem{remark}{Remark}
\newtheorem{corollary}{Corollary}
\newtheorem{assumption}{Assumption}
\newtheorem{definition}{Definition}
\newtheorem{proposition}{Proposition}

\newcommand\scalemath[2]{\scalebox{#1}{\mbox{\ensuremath{\displaystyle #2}}}}

\hypersetup{
    colorlinks=false,
    pdfborder={0 0 0},
}

\newbool{extended}
\setbool{extended}{true}

\newenvironment{extendedonly}{}{}
\ifboolexpr{not bool {extended}}{\AtBeginEnvironment{extendedonly}{\comment}%
\AtEndEnvironment{extendedonly}{\endcomment}}{}
\newenvironment{shortonly}{}{}
\ifbool{extended}{\AtBeginEnvironment{shortonly}{\comment}%
\AtEndEnvironment{shortonly}{\endcomment}}{}

\ifbool{extended}{\usepackage[firstpageonly=true]{draftwatermark}}{}

\title{An Offset-Free Nonlinear MPC scheme for systems \\ learned by Neural NARX models}

\author{Fabio Bonassi$^{1, *}$ \thanks{$^*$ Corresponding author}, Jing Xie$^{1}$, Marcello Farina$^{1}$, and Riccardo Scattolini$^{1}$
	\thanks{$^{1}$ The authors are with the Dipartimento di Elettronica, Informazione e Bioingegneria, Politecnico di Milano, Via Ponzio 34/5, 20133, Milano, Italy. E-mail: {\tt\small name.surname@polimi.it}}}

\begin{document}

\maketitle

\ifbool{extended}{
	\DraftwatermarkOptions{%
 		angle=0,
 		hpos=0.5\paperwidth,
 		vpos=0.95\paperheight,
 		fontsize=0.012\paperwidth,
 		color={[gray]{0.2}},
 		text={
   			\parbox{0.99\textwidth}{\copyright \, 2022 IEEE. This manuscript is the extended version of an article appearing in the Proceedings of the 2022 IEEE 61st Conference on Decision and Control (CDC), December 6-9, 2022, Cancun, Mexico, pp. 2123-2128. DOI: \href{https://doi.org/10.1109/CDC51059.2022.9992362}{10.1109/CDC51059.2022.9992362}.}},}}{}

\begin{abstract}
This paper deals with the design of nonlinear MPC controllers that provide offset-free setpoint tracking for models described by Neural Nonlinear AutoRegressive eXogenous (NNARX) networks. 
The NNARX model is identified from input-output data collected from the plant, and can be given a state-space representation with known measurable states made by past input and output variables, so that a state observer is not required. 
In the training phase, the Incremental Input-to-State Stability ($\delta$ISS) property can be forced when consistent with the behavior of the plant.
The $\delta$ISS property is then leveraged to augment the model with an explicit integral action on the output tracking error, which allows to achieve offset-free tracking capabilities to the designed control scheme.
The proposed control architecture is numerically tested on a water heating system and the achieved results are compared to those scored by another popular offset-free MPC method, showing that the proposed scheme attains remarkable performances even in presence of disturbances acting on the plant.
\end{abstract}

\begin{keywords}
	Predictive control for nonlinear systems, Neural Networks, Output Regulation
\end{keywords}

\section{Introduction}
With the availability of large and informative data sets and increasing computation power, learning-based methods for nonlinear system identification have become popular in the control community \cite{zhong2013learningbased, schoukens2019nonlinear}, see for example the control design algorithms based on set membership ide ntification \cite{terzi2019learning} and Koopman-based system identification \cite{korda2018linear}.

Among the most popular machine learning approaches for control, the ones relying on Recurrent Neural Networks (RNN) have been proven to provide significant results \cite{ljung2020deep}. 
Among the many model architectures proposed in the literature, it is worth mentioning here Neural NARXs (NNARX) \cite{levin1996control},  Echo State Networks (ESN) \cite{jaeger2007echo}, Long  Short Term Memory networks (LSTM) \cite{hochreiter1997long}, and Gated Recurrent Units (GRU) \cite{chung2014empirical}. 
In particular, it has been shown that these RNNs can be recast as state-space dynamical systems that can be trained to identify unknown systems provided that enough input-output measured data is available \cite{bonassi2022survey, bianchi2017recurrent}.
 
How to guarantee stability properties of these RNNs architectures, in terms of Input to State Stability (ISS) and Incremental Input-to-State stability ($\delta$ISS), has been recently studied \cite{bonassi2022survey}, see \cite{terzi2021learning} for LSTMs,  \cite{bonassi2020stability} for GRUs, \cite{armenio2019model} for ESNs, and \cite{bonassi2021nnarx} for NNARXs. 
In these works, sufficient conditions for the ISS and $\delta$ISS of RNNs are stated as nonlinear inequalities on the networks' parameters.
In \cite{bonassi2022survey} these stability properties have been shown to be useful to address the interpretability, safety and robustness issues.
Furthermore, $\delta$ISS has proven to be a fundamental tool for the design of provenly stabilizing Model Predictive Control (MPC) laws for several RNN architectures, see \cite{terzi2021learning, armenio2019model}.

A major limitation of these stabilizing MPC strategies, however, is that their static performances, i.e. their capability of steering the system's output towards a constant setpoint, is tightly related to the magnitude of the plant-model mismatch and to the presence of disturbances that affect the system.
In many applications, however, ensuring that the controller can track asymptotically-constant reference signals with zero offset might be a requirement.
In this context, several offset-free nonlinear MPC strategies have been proposed in the literature, see \cite{pannocchia2015offset} for a review on the topic.
Among them, one of the most popular is the one described in \cite{morari2012nonlinear}, in which the authors propose to augment the system model with a disturbance model, and then to design an observer to estimate its state. 
The observed state is then used to synthesize a stabilizing nonlinear MPC law.
This approach relies on the possibility to suitably model and estimate such disturbance.

An alternative approach is described in \cite{magni2001output}, which proposes to augment the system with the output tracking error integrator, and to use a state observer to reconstruct the state of such system.
Notably, this scheme can be adopted to solve the tracking problem even for time-varying references that are generated by stable exogenous systems, such as ramps or sinusoids.
This solution has been adopted in \cite{bonassi2021nonlinear} for the design of an offset-free MPC controller for systems learned by GRU networks.

In this paper, we focus on systems learned by NNARX models, which are quite popular owing to their simple structure and training.
Indeed, in NARX models the output at the future time instant is computed as a nonlinear function of  past input and output data. 
In particular, Neural NARX models are those that feature a feed-forward neural network as nonlinear regression function.
The advantage of NNARXs is that, since their state boils down to a vector of past input-output data, when these models are operated in closed-loop the state is known \cite{bonassi2021nnarx}, which makes the control design procedure significantly easier.

In this context, the goal of this paper is to design a control strategy for NNARX models that guarantees offset-free tracking of constant references, as well as the nominal stability of the closed-loop system.
Unlike the aforementioned approaches, the proposed strategy does not rely upon a state observer.
Along the lines of \cite{magni2001output, bonassi2021nonlinear}, we propose to include two elements in the control system: (\emph{i}) an output tracking error integrator, which allows to attain offset-free tracking capabilities; (\emph{ii}) a derivative action, which ensures that -- at steady state -- the regulation of the system relies entirely upon the integral action, whereas the goal of the MPC is to improve the dynamic performances and to ensure constraint satisfaction during the transient.

To provide sound guarantees, we show that the $\delta$ISS of the model, under mild assumptions, ensures that such integral action can be designed to preserve the local asymptotic stability of the closed-loop system.
The proposed approach has been tested on a water heating benchmark system, and the achieved closed-loop performances have been compared to those achieved by the strategy proposed in \cite{morari2012nonlinear}.
The simulation results show that, unlike this latter, the proposed approach attains offset-free tracking of constant references even in presence of significant plant  perturbations. \ifbool{extended}{\smallskip}{}

\begin{extendedonly}
	The paper is structured as follows. 
	In Section \ref{sec:nnarx} the NNARX models and their stability properties are presented. 
	In Section \ref{sec:control} the proposed control architecture is detailed, which is then tested on the water heating benchmark system in Section \ref{sec:example}.
	Finally, conclusions are drawn in Section \ref{sec:conclusions}.
\end{extendedonly}

\subsection{Notation}
The following notation is adopted. Given a vector $v$, we indicate by $v^\prime$ its transpose and by $\| v \|_p$ its $p$-norm.
Moreover, given a matrix $Q$, we denote $\| v \|_Q^2 = v^\prime Q v$.
For compactness, the time instant associated to time-varying vectors is reported as a subscript, e.g. $v_k$. Sequences of vectors are indicated by bold-face fonts, i.e. $\bm{v}_k = \{ v_0, ..., v_k \}$, and their $\ell_{p,q}$ norm is defined as
$\| \bm{v}_k \|_{p,q} = \big\| [ \| v_0 \|_p, ..., \| v_k \|_p ] \big\|_q$. Notably, $\| \bm{v}_k \|_{p,\infty} = \max_{t \in \{0, ..., k\}} \| v_t \|_p$.

\section{Neural NARX Model} \label{sec:nnarx}
NNARX models \cite{bonassi2021nnarx} are nonlinear, time-invariant, discrete-time models with input $u$, assumed to lie  in a compact set $\mathcal{U}\subseteq \mathcal{R}^m$, and output $y \in {R}^p$. In this paper, a square system is assumed, i.e., $p=m$.
Letting $k$ be the discrete time index, at time $k$ the future output $y_{k+1}$ is computed as a nonlinear regression function $\eta$ on past $N$ input and output samples:
\begin{equation} \label{eq:rnn:nnarx_model}
y_{k+1} = \eta(y_{k}, y_{k-1}, ..., y_{k-N+1}, u_{k}, u_{k-1}, ... u_{k-N}; \Phi),
\end{equation}
where $\Phi$ indicates the model's parameters.
It is easy to rewrite model \eqref{eq:rnn:nnarx_model} in state space form by defining, for $i \in \{1, ..., N \}$,
\begin{equation} \label{eq:rnn:nnarx_states}
	z_{i, k} = \left[\begin{array}{c}
		y_{k-N+i} \\
		u_{k-N-1+i}
	\end{array}\right],
\end{equation}
and by denoting the state vector $x_{k} = [ z_{1, k}^\prime, ..., z_{N, k}^\prime]^\prime \in \mathbb{R}^{n}$, the model can be compactly rewritten as
\begin{equation} \label{eq:nnarx:statespace}
\begin{dcases}
  x_{k+1} = {A} x_{k} + B_{u} u_{k} + B_{x} \eta(x_{k}, u_{k}; \Phi) \\
  y_{k} = C x_{k}
\end{dcases}
\end{equation}
where $A$, $B_u$, $B_x$, and $C$ are fixed matrices with known structure and elements equal to zero or one, see \cite{bonassi2021nnarx}.

In NNARX models, the regression function $\eta$ in \eqref{eq:rnn:nnarx_model} is a Feed-Forward Neural Network (FFNN), i.e. a static map of $M$ layers of neurons.
Each layer is a linear combination of its inputs, squashed by a suitable nonlinear function, named activation function. 
A compact formulation of $\eta$ is
\begin{equation}  \label{eq:model:ffnn}
		\eta(x_{k}, u_{k}) = U_0\eta_M(\eta_{M-1}(...\eta_1(x_k,u_k),u_k),u_k)+b_0
\end{equation}
where $\eta_l$ is the nonlinear relation between $l$-th and the previous layer, which can be stated as
\begin{equation}  \label{eq:model:ffnn2}
		\eta_l(\eta_{l-1},u_k) = \psi_l \big( W_l u_{k} + U_l \eta_{l-1} + b_l \big),
\end{equation}
where $\psi_l$ is a Lipschitz-continuous activation function, applied element-wise on its argument, having Lipschitz constant $L_{\psi l}$ and satisfying $\psi_l(0) = 0$. The matrices $W_l$, $U_l$ and $b_l$ are the weights of the layer, which constitute the network's parameters $\Phi = \{ U_0, b_0, \{ U_l, W_l, b_l \}_{ l = 1, ..., M} \}$.
An example of activation function is the $\tanh$ function, see \cite{bonassi2021nnarx}.
For compactness, the NNARX model is henceforth denoted as,
\begin{equation} \label{eq:nnarx:compact}
	\Sigma: \begin{dcases}
		x_{k+1} = f(x_{k},u_{k}) \\
		y_k = C x_k
	\end{dcases},
\end{equation}
where the dependency on $\Phi$ is omitted for compactness. 

The weights $\Phi$ are learned from the input-output data collected from the system during the so-called training procedure, in which the parameters that best explain the measured data are sought.
Generally, one seeks the set of weights minimizing the simulation error, i.e. the open-loop prediction error between the model and the real system.
Entering into the details of this procedure is not among the aims of this article: the interested reader is referred to \cite{bonassi2021nnarx}.

Under the assumption that the motion of the plant to be identified displays stability properties\footnote{In particular, we assume that the plant is $\delta$ISS \cite{bonassi2022survey}. 
This property can either be known a-priori, e.g. by physical arguments, or it can verified numerically on the collected data.}, as discussed in \cite{bonassi2021nnarx} it is possible to include an additional term in the training loss function that allows to learn a provenly ISS and $\delta$ISS NNARX model.
The definition of $\delta$ISS for a generic state-space nonlinear system, such as \eqref{eq:nnarx:compact}, is reported below.

\begin{definition}[$\delta$ISS] \label{def:deltaiss}
	A system is $\delta$ISS if there exist functions $\beta$ of class $\mathcal{KL}$ and $\gamma$ of class $\mathcal{K}_\infty$ such that, for any pair of initial states $x_{a,0}$ and $x_{b,0}$, and any pair of input sequences $\bm{u}_a$ and $\bm{u}_b$, where $u_{a,k}\in \mathcal{U}$ and $u_{b,k}\in \mathcal{U}$, such that
	\begin{equation} \label{eq:deltaiss:definition}
		\| x_{a,k} - x_{b,k} \|_2 \leq \beta(\| x_{a,0} - x_{b,0} \|_2, k) + \gamma(\| \bm{u}_{a,k} - \bm{u}_{b,k} \|_{2, \infty})
	\end{equation}
	for any $k \geq 0$, where $x_{*,k}$ denotes the state trajectory of the system initialized in $x_{*,0}$ and fed by the sequence $\bm{u}_{*,k}$. 
\end{definition}

Henceforth, it is assumed that the NNARX is trained according to the prescriptions detailed in \cite{bonassi2021nnarx} in order to ensure its $\delta$ISS, allowing to verify the following assumption. \smallskip

\begin{assumption} \label{ass:delta_iss}
The NNARX model \eqref{eq:nnarx:statespace} is $\delta$ISS.
\end{assumption}

\section{Controller Design} \label{sec:control}
The main goal of this paper is, given the NNARX model \eqref{eq:nnarx:compact} of the system, to propose a solution for problem of offset-free tracking of constant references.
Specifically, for some constant output reference $\bar{y}$, we want to design an MPC law which guarantees that the output error converges asymptotically to zero, i.e.
\begin{equation}\label{eq:setp}
    e_k = \bar{y} - y_k \xrightarrow[k \to \infty]{} 0.
\end{equation}

\subsection{Linearization}
To solve this problem, we will rely upon the linearization of model \eqref{eq:nnarx:compact} around an equilibrium point $(\bar{x}, \bar{u}, \bar{y})$  satisfying 
\begin{equation} \label{eq:control:equilibrium_def}
	\begin{dcases}
		\bar{x} = f(\bar{x}, \bar{u}) \\
  		\bar{y} = C \bar{x}
	\end{dcases}
\end{equation}
where $\bar{u}$ is assumed to belong to $\mathcal{U}$. 
Let us first denote by
\begin{equation} \label{eq:control:linearization}
\begin{aligned}
	A_\delta = \left. \frac{\partial f(x, u)}{\partial x}  \right\lvert_{(\bar{x}, \bar{u})},  \quad B_\delta = \left. \frac{\partial f(x, u)}{\partial u}  \right\lvert_{(\bar{x}, \bar{u})}, 
\end{aligned}
\end{equation} 
the matrices of the linearized system around the equilibrium $(\bar{x}, \bar{u}, \bar{y})$.
To characterize the stability properties of $A_\delta$, the following result is provided. \smallskip

\begin{proposition} \label{prop:asymptotic_stability}
Consider a nonlinear system in the form of \eqref{eq:nnarx:compact}.
Assume that it is $\delta$ISS in the sense specified by Definition \ref{def:deltaiss}, and that function $\beta$ admits an exponential form, i.e. that there exist constants $\rho > 0$ and $\lambda \in (0, 1)$ such that $\beta(\| x_{a,0} - x_{b,0} \|_2, k) \leq \rho \| x_{a,0} - x_{b,0} \|_2 \, \lambda^k$.
	Then, for each equilibrium $(\bar{x}, \bar{u}, \bar{y})$ satisfying \eqref{eq:control:equilibrium_def}, the matrix $A_\delta$ \eqref{eq:control:linearization} is Schur stable.
\end{proposition}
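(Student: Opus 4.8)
The plan is to exploit the exponential incremental bound by driving the system with a constant input equal to $\bar u$ and comparing a perturbed trajectory against the equilibrium trajectory, and then to pass to the linearization through a directional-derivative limit. First I would fix the input to the constant sequence $\bar u$ on both trajectories, so that $\| \bm{u}_{a,k} - \bm{u}_{b,k}\|_{2,\infty} = 0$ and, since $\gamma \in \mathcal{K}_\infty$ satisfies $\gamma(0)=0$, the input term in \eqref{eq:deltaiss:definition} vanishes. Taking $x_{b,0} = \bar x$ gives $x_{b,k} \equiv \bar x$ by \eqref{eq:control:equilibrium_def}, while choosing $x_{a,0} = \bar x + \delta x_0$ and invoking the assumed exponential form of $\beta$ yields the purely state-driven estimate $\| x_{a,k} - \bar x \|_2 \le \rho\, \lambda^k \, \| \delta x_0\|_2$ for every $k \ge 0$.

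Next I would rewrite this estimate in terms of the flow map. Denoting by $F^k$ the $k$-fold composition of $f(\cdot, \bar u)$, the perturbed trajectory is $x_{a,k} = F^k(\bar x + \delta x_0)$ and $F^k(\bar x) = \bar x$; moreover, since the activation functions are smooth, $f(\cdot,\bar u)$ is differentiable and the chain rule gives $\partial F^k / \partial x |_{\bar x} = A_\delta^k$, with $A_\delta$ as in \eqref{eq:control:linearization}. The $\delta$ISS estimate therefore reads $\| F^k(\bar x + \delta x_0) - \bar x\|_2 \le \rho \lambda^k \|\delta x_0\|_2$.

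The crucial step is to transfer this nonlinear bound to $A_\delta^k$. For a fixed direction $v$ I would set $\delta x_0 = t v$ with $t>0$ and use differentiability to write $F^k(\bar x + t v) - \bar x = t A_\delta^k v + r_k(t)$, where $r_k(t)/t \to 0$ as $t \to 0^+$. Substituting into the estimate, dividing by $t$, and letting $t \to 0^+$ gives $\| A_\delta^k v\|_2 \le \rho \lambda^k \|v\|_2$. Since $v$ is arbitrary, this yields $\| A_\delta^k\|_2 \le \rho \lambda^k$ for every $k$.

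Finally, I would conclude via Gelfand's formula: the spectral radius of $A_\delta$ satisfies $\mathrm{sr}(A_\delta) = \lim_{k\to\infty} \|A_\delta^k\|_2^{1/k} \le \lim_{k\to\infty} (\rho \lambda^k)^{1/k} = \lambda < 1$, so every eigenvalue of $A_\delta$ lies strictly inside the unit disk and $A_\delta$ is Schur stable. The main obstacle I anticipate is the rigorous justification of the limit interchange in the transfer step: it hinges on the differentiability of $f(\cdot, \bar u)$ (already implicit in the very definition of $A_\delta$) and on controlling the Taylor remainder $r_k(t)$ — but since the $\delta$ISS bound need only be applied pointwise in $k$, the remainder has to be handled for each fixed $k$ separately, which keeps the argument clean and avoids any uniform-in-$k$ estimate.
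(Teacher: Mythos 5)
Your argument is correct, but it follows a genuinely different route from the paper's. The paper decomposes the nonlinear dynamics as $\delta x_{k+1} = A_\delta \delta x_k + \varepsilon(\delta x_k)$, bounds the linearization error quadratically via the Mean Value Theorem (which requires the additional assumption that the gradient of $f(\cdot,\bar u)$ is Lipschitz), invokes a converse Lyapunov theorem for exponentially globally asymptotically stable discrete-time systems to obtain a quadratic Lyapunov function $V$ for the nonlinear error dynamics, and then shows that $V$ also decreases along $\delta x_{k+1} = A_\delta \delta x_k$ in a neighborhood of the origin. You instead keep $k$ fixed, identify $A_\delta^k$ with the derivative of the $k$-fold flow map at the fixed point $\bar x$, transfer the incremental bound $\| F^k(\bar x + t v) - \bar x\|_2 \le \rho\lambda^k t \|v\|_2$ to $\|A_\delta^k v\|_2 \le \rho\lambda^k\|v\|_2$ by a directional-derivative limit, and conclude with Gelfand's formula. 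Both are sound; the directional-derivative step is legitimate precisely because, as you note, the remainder only needs to be controlled for each fixed $k$, so no uniformity in $k$ is required. Your route is more elementary and needs weaker hypotheses (differentiability of $f$ at $\bar x$ suffices, with no Lipschitz-gradient condition), and it yields the sharper quantitative conclusion $\|A_\delta^k\|_2 \le \rho\lambda^k$, i.e.\ a spectral radius bounded by $\lambda$ itself. The paper's Lyapunov-based route is heavier but constructs a quadratic Lyapunov function for the nonlinear dynamics near the equilibrium, an object that could in principle be reused elsewhere (e.g.\ as a terminal cost in the MPC design), although the paper does not exploit it further.
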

\begin{proof}
	\ifbool{extended}{See the Appendix.}{See the extended version of the paper \cite{bonassi2022extended}.}
\end{proof}
\smallskip

Let us remark that the exponential form of function $\beta$ is indeed enjoyed by $\delta$ISS RNNs, see \cite{terzi2021learning, bonassi2020stability}, and specifically \cite{bonassi2021nnarx} for NNARXs.
Moreover, let us introduce the following Assumption. \smallskip

\begin{assumption} \label{ass:linearized}
	The tuple $(A_\delta, B_\delta, C)$  is reachable, observable, and does not have invariant zeros at $z=1$.
\end{assumption}
\smallskip

Under Assumption \ref{ass:linearized}, in light of Theorem 1 in \cite{de1997stabilizing}, one can guarantee the existence of an open neighborhood of $\bar{y}$, denoted by $\Gamma(\bar{y}) \subseteq \mathcal{R}^{m}$, where, for any $\tilde{y} \in \Gamma(\bar{y})$, there exists an equilibrium $(\tilde{x}(\tilde{y}), \tilde{u}(\tilde{y}), \tilde{y})$, where
\begin{equation}
	\begin{dcases}
		\tilde{x}(\tilde{y}) = f(\tilde{x}(\tilde{y}), \tilde{u}(\tilde{y})) \\
  		\tilde{y} = C \tilde{x}(\tilde{y})
	\end{dcases}.
\end{equation}
This local result allows to conclude that it is possible to move the output reference signal in a neighborhood of the output equilibrium $\bar{y}$ and still guarantee that a feasible solution to the tracking problem exists. 

\begin{figure}[t]
	\centering
	\includegraphics[width=\ifbool{extended}{0.9}{0.75} \linewidth]{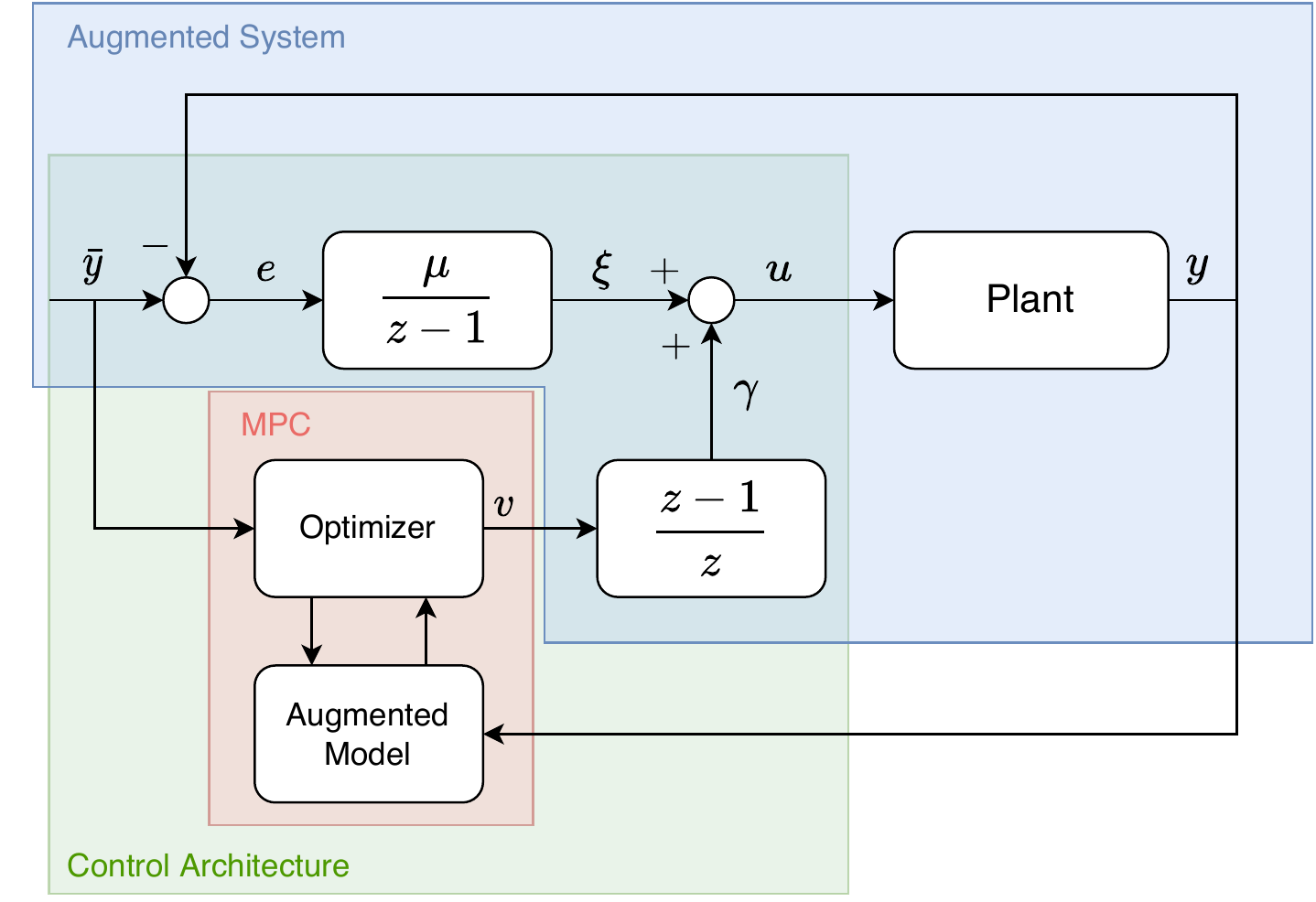}
	\caption{Schematic of the proposed control architecture}
	\label{fig:controlscheme}
\end{figure}

\subsection{The control architecture}
Once the conditions for the existence of a solution to the output tracking problem have been established, we are in the position to describe the main elements of the  adopted control architecture, depicted in Figure \ref{fig:controlscheme}, listed below
\begin{enumerate}[i.]
	\item The system is augmented with the integral of the output tracking error $e_k = \bar{y} - y_k$. Indeed, in light of the Internal Model Principle \cite{francis1976internal}, such integral action guarantees robust asymptotic zero-error regulation for constant reference signals, i.e. $e_k \xrightarrow[k \to \infty]{} 0$, and plant's parametric uncertainties, provided that the closed-loop stability guarantees are maintained. 
	\item The model is augmented with a derivative action on MPC's control variable $v$. This guarantees that, at steady state, the MPC contribution is null and the control variable entirely relies on the integral action. This approach, later detailed, is useful in the definition of suitable terminal constraints to be used in the formulation of the stabilizing MPC algorithm. The aim of the MPC regulator is that of performance enhancement and constraint handling during transients.
\end{enumerate}

As clear from Figure \ref{fig:controlscheme}, the control action $u$ is composed of two terms
\begin{equation} \label{eq:control:u}
	u_k = \xi_k + \gamma_k,
\end{equation}
where $\xi_k \in \mathbb{R}^m$ and $\gamma_k \in \mathbb{R}^m$ are the integral and derivative actions, respectively.
More specifically, the integral action is ruled by
\begin{equation} \label{eq:control:integral}
	\xi_{k+1} = \xi_k + \mu ( \bar{y} - C x_k ),
\end{equation}
where $\mu$ denotes the gain of the integral action.
Also, the derivative action $\gamma_k$ is defined as
\begin{equation} \label{eq:control:derivative}
	\begin{dcases}
		\theta_{k+1} = v_k \\ 
		\gamma_k = v_k - \theta_k
	\end{dcases}.
\end{equation}

Thus, the augmented system is obtained combining \eqref{eq:nnarx:compact}, \eqref{eq:control:u}, \eqref{eq:control:integral}, and \eqref{eq:control:derivative}, and it reads as
\begin{equation}
	\begin{dcases}
		x_{k+1} = f(x_k, u_k) \\
		\xi_{k+1} = \xi_k + \mu (\bar{y} - C x_k) \\
		\theta_{k+1} = v_k \\
		\gamma_k = v_k -\theta_k \\
		u_k = \xi_k + \gamma_k \\
		y_k = C x_k
	\end{dcases},
\end{equation}
which will henceforth be compactly denoted as
\begin{equation} \label{eq:control:enlarged}
	\Sigma_a: \begin{dcases}
		\chi_{k+1} = f_a(\chi_k, v_k, \bar{y}) \\
		\zeta_k = g_a(\chi_k)
	\end{dcases},
\end{equation}
where $\chi_k = [x_k^\prime, \xi_k^\prime, \theta_k^\prime ]^\prime$ denotes the state of the augmented system and $\zeta_k = [y_k^\prime, u_k^\prime]^\prime$ its output.

The first step of the design procedure consists of tuning the gain $\mu$, following the guidelines of \cite{scattolini1985parameter}, such that the enlarged system \eqref{eq:control:enlarged} displays stability properties. 
The following proposition can be stated. \smallskip

\begin{corollary} \label{prop:integrator}
Assume that $A_\delta$ is Schur stable, and that Assumption \ref{ass:linearized} holds.
Then, there exists $\check{\mu} > 0$ such that, for any $\tilde{\mu} \in (0, \check{\mu})$, the integrator gain
	\begin{equation}
		\mu = \tilde{\mu} \big[ C_\delta (I - A_\delta)^{-1} B_\delta \big]^{-1}
	\end{equation}
	makes the enlarged system \eqref{eq:control:enlarged}, linearized around $(\bar{\chi}, \bar{v}, \bar{\zeta})$,  asymptotically stable, where $\bar{\chi} = \big[ \bar{x}^\prime, \bar{\xi}^\prime, \bar{\theta}^\prime \big]^\prime = \big[ \bar{x}^\prime, \bar{u}^\prime, \bar{v}^\prime \big]^\prime$, $\bar{\zeta} = [\bar{y}^\prime, \bar{u}^\prime]^\prime$, and $\bar{v}$ is any constant value.
\end{corollary}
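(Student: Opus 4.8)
The plan is to reduce the claim to the Schur stability of the state matrix of the linearized augmented dynamics, and then to track how its eigenvalues depend on the scalar $\tilde{\mu}$. First I would linearize $\Sigma_a$ around $(\bar{\chi}, \bar{v}, \bar{\zeta})$ and consider the free state matrix (setting the input deviation to zero). Writing $\delta x, \delta\xi, \delta\theta$ for the state deviations and recalling $u_k = \xi_k + v_k - \theta_k$, the linearized dynamics read
\begin{equation*}
\begin{bmatrix} \delta x_{k+1} \\ \delta\xi_{k+1} \\ \delta\theta_{k+1} \end{bmatrix} = \mathcal{A} \begin{bmatrix} \delta x_k \\ \delta\xi_k \\ \delta\theta_k \end{bmatrix}, \quad \mathcal{A} = \begin{bmatrix} A_\delta & B_\delta & -B_\delta \\ -\mu C & I & 0 \\ 0 & 0 & 0 \end{bmatrix}.
\end{equation*}
Since the derivative action renders $\theta$ deadbeat, the last block row of $\mathcal{A}$ vanishes, and a block-triangular determinant expansion gives $\det(zI - \mathcal{A}) = z^m \det(zI - \tilde{\mathcal{A}})$ with $\tilde{\mathcal{A}} = \bigl[\begin{smallmatrix} A_\delta & B_\delta \\ -\mu C & I\end{smallmatrix}\bigr]$. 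Thus $\mathcal{A}$ contributes $m$ eigenvalues at the origin and is Schur if and only if the plant-plus-integrator matrix $\tilde{\mathcal{A}}$ is Schur, so the problem reduces to choosing $\mu$ to make $\tilde{\mathcal{A}}$ Schur.

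Next I would check that the proposed gain is well defined and identify the baseline at $\tilde{\mu}=0$. Because $A_\delta$ is Schur, $I - A_\delta$ is invertible and the linearized plant has DC gain $G(1) = C(I-A_\delta)^{-1}B_\delta$ (with $C_\delta = C$ the linearized output matrix). Since $(A_\delta, B_\delta, C)$ is square with no invariant zero at $z=1$, the Rosenbrock matrix at $z=1$ has full rank; given that $1$ is not a pole, a Schur-complement rank argument shows this is equivalent to $G(1)$ being nonsingular, so $\mu = \tilde{\mu}\, G(1)^{-1}$ exists. At $\tilde{\mu}=0$ the gain vanishes and $\tilde{\mathcal{A}}(0) = \bigl[\begin{smallmatrix} A_\delta & B_\delta \\ 0 & I\end{smallmatrix}\bigr]$ is block-triangular, with spectrum $\sigma(A_\delta) \cup \{1,\dots,1\}$: the $n$ plant modes lie strictly inside the unit disk, while $m$ integrator modes sit exactly at $z=1$.

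The core step is then to track, for small $\tilde{\mu}>0$, the $m$ eigenvalues branching from $z=1$. By continuity of the spectrum the plant modes stay inside the disk for small $\tilde{\mu}$, so it suffices to show the integrator modes move strictly inward. For $z \notin \sigma(A_\delta)$ I would factor the characteristic polynomial via a Schur complement,
\begin{equation*}
\det(zI - \tilde{\mathcal{A}}) = \det(zI - A_\delta)\,\det\!\bigl((z-1)I + \mu\, C(zI-A_\delta)^{-1}B_\delta\bigr),
\end{equation*}
and expand the second factor near $z=1$. Using $C(zI-A_\delta)^{-1}B_\delta = G(1) + O(z-1)$ and substituting $\mu = \tilde{\mu}\,G(1)^{-1}$, the roots near $z=1$ make $(z-1)(I + O(\tilde{\mu})) + \tilde{\mu} I$ singular, i.e. $z = 1 - \tilde{\mu} + O(\tilde{\mu}^2)$. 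Hence each integrator mode enters the open unit disk for $\tilde{\mu}$ positive and small enough, which yields the threshold $\check{\mu} > 0$; this is exactly the time-scale-separation mechanism of \cite{scattolini1985parameter}, whose slow reduced dynamics of the integrator state is $\delta\xi_{k+1} = (1-\tilde{\mu})\delta\xi_k$.

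The delicate point — and the step I expect to require the most care — is the perturbation analysis of the $m$-fold eigenvalue at $z=1$, which is defective in general and for which $G(1)$ need not be symmetric nor have real spectrum, so naive first-order formulas for simple eigenvalues do not apply directly. The cleanest rigorous route is to invoke the parameterization result of \cite{scattolini1985parameter}, whose hypotheses (Schur $A_\delta$, reachability, observability, and no zero at $z=1$ guaranteeing invertibility of $G(1)$) are precisely Assumption \ref{ass:linearized} together with the standing assumption on $A_\delta$; alternatively, one makes the Schur-complement estimate above uniform in $z$ on a small circle around $1$ and applies Rouch\'e's theorem to count the branching roots strictly inside the unit disk.
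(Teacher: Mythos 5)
Your argument is correct, but it is genuinely more detailed than the paper's own proof, which is a two-line reduction: the paper establishes that $A_\delta$ is Schur via Proposition \ref{prop:asymptotic_stability} and then simply states that, under Assumption \ref{ass:linearized}, ``the results shown in \cite{scattolini1985parameter} can be applied.'' You instead reconstruct the mechanism behind that citation: the block-triangular elimination of the deadbeat derivative state $\theta$ (contributing $m$ eigenvalues at the origin), the Schur-complement factorization $\det(zI-\tilde{\mathcal{A}})=\det(zI-A_\delta)\det\bigl((z-1)I+\mu C(zI-A_\delta)^{-1}B_\delta\bigr)$, the equivalence between the no-invariant-zero-at-$z=1$ condition and invertibility of the DC gain $G(1)$, and the root-branching estimate $z=1-\tilde{\mu}+O(\tilde{\mu}^2)$ for the $m$ integrator modes. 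Two things your version buys that the paper leaves implicit: (\emph{i}) the explicit disposal of the augmentation by $\theta$, which is needed before the plant-plus-integrator result of \cite{scattolini1985parameter} can be applied to the full enlarged system \eqref{eq:control:enlarged}; and (\emph{ii}) an honest flag that the $m$-fold eigenvalue at $z=1$ is defective in general, so the perturbation step needs either Rouch\'e's theorem on a small circle around $z=1$ or the cited parameterization result rather than naive first-order eigenvalue formulas. The paper's route is shorter and delegates all of this to the reference; yours is self-contained and makes the time-scale-separation mechanism visible. No gap.
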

\begin{proof}
	In light of Proposition \ref{prop:asymptotic_stability}, the matrix $A_\delta$ is Schur stable. Then, thanks to Assumption \ref{ass:linearized}, the results shown in \cite{scattolini1985parameter} can be applied to prove the corollary. 
\end{proof}

\subsection{MPC design}
Having defined the augmented system model $\Sigma_a$, a stabilizing nonlinear MPC law can be designed.
Letting $\bar{\chi}$ and $\bar{\zeta}$ be the target state and output {introduced in Corollary \ref{prop:integrator}}, the stabilizing MPC can be stated as follows\footnote{{To reduce the computational burden of the optimization problem, a control horizon smaller than the prediction horizon could be adopted \cite{rawlings2017model}.}}.

\begin{subequations} \label{eq:MPC}
	\begin{align}
		\min_{v_{0|k}, ..., v_{N_p-1|k}} &  \sum_{i=0}^{N_p} \Big[ \big\| \chi_{i|k} - \bar{\chi} \big\|_Q^2 + \big\| \zeta_{i|k} - \bar{\zeta} \big\|_R^2 \Big]\label{eq:MPC:cost} \\
		\text{s.t.} \quad & \forall i \in \{ 0, ..., N_p-1 \} \nonumber \\
		&  \chi_{0|k} = \chi_{k} \label{eq:MPC:x0} \\
		& \chi_{i+1|k} = f_a(\chi_{i|k}, v_{i|k}, \bar{y}) \label{eq:MPC:dynamics} \\
		& \zeta_{i|k} = g_a(\chi_{i|k}) \label{eq:MPC:output} \\
		& E_u \zeta_{i|k} \in \mathcal{U} \label{eq:MPC:actuator} \\
		& \chi_{N_p|k} = \bar{\chi} \label{eq:MPC:terminal}
	\end{align}
\end{subequations}

The adopted cost function \eqref{eq:MPC:cost} penalizes the deviation of the augmented system's state and output from their value at equilibrium.
Note that the target equilibrium for the integrator state $\xi$ is $\bar{u}$, as the integral action is assumed to provide the input's equilibrium at steady state.
The equilibrium value of the derivator state, $\bar{v}$, is arbitrary but constant, so that $\bar{\gamma}=0$, i.e. the contribution of the derivative action converges to zero.
The weight matrices are defined as $Q = \text{diag}(Q_x, Q_\xi, Q_\theta)$ and $R = \text{diag}(R_e, R_u)$, where $\text{diag}(\cdot)$ indicates the block-diagonal operator. 
The weights $R_e$ and $R_u$ penalize the output error and control effort, respectively, while $Q$ penalizes the distance of the augmented state from its equilibrium value. 
{Note that the deviation of $\theta$ from its arbitrary equilibrium is only penalized for numerical reasons.  It is hence advisable to select $Q_\theta \ll Q_x, Q_\xi$.}

The augmented model $\Sigma_a$ is used as predictive model, see \eqref{eq:MPC:dynamics} and \eqref{eq:MPC:output}, and it is initialized in the known state $\chi_k$, see \eqref{eq:MPC:x0}.
The inputs saturation constraints are also embedded via constraint \eqref{eq:MPC:actuator}, where $E_u$ is a selection matrix that extracts $u_{i|k}$ out of $\zeta_{i|k} = [y_{i|k}^\prime, u_{i|k}^\prime]^\prime$.
Lastly, {as customary in MPC}, the terminal constraint \eqref{eq:MPC:terminal} is imposed.

According to the Receding Horizon principle, at time $k$ the optimization problem \eqref{eq:MPC} is solved, retrieving the optimal control sequence $v^*_{0|k}, ..., v^*_{N_p-1|k}$, and only the first control move, i.e. $v^*_{0|k}$, is applied.
At the successive time step the procedure is repeated, based on the measured state $\chi_{k+1}$. \smallskip

\begin{remark}
The MPC law formulated in \eqref{eq:MPC} is a standard MPC with terminal constraint. 
Hence, its nominal recursive feasibility and closed-loop stability can be guaranteed \cite{rawlings2017model}.
\end{remark}

\section{Numerical Example} \label{sec:example}
\subsection{Benchmark system description}\label{sec:example:plant}
\begin{figure}
	\centering
	\includegraphics[width=\ifbool{extended}{0.9}{0.75 } \linewidth]{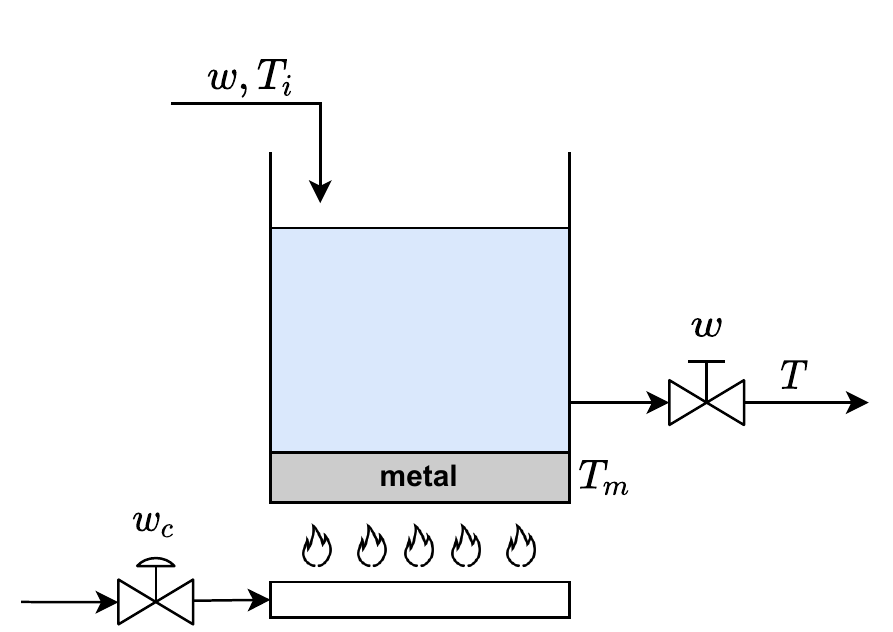}
	\caption{Water-heating system illustration}
	\label{fig:plant}
\end{figure}

The proposed control architecture has been tested on the water-heating benchmark system depicted in Figure \ref{fig:plant}.
The objective of this system is to control the temperature of the water in a reservoir so as to provide the users with the required flow of water at the desired temperature. 
Specifically, the water is heated through a metal plate placed under the tank, which is heated by means of a gas burner.

The water demand $w$, expressed in $kg/s$, represent a disturbance.
For simplicity, it is assumed that the water flow rate at the inlet matches the demand, so that the level dynamics are neglected.
We indicate by $T_i$ the temperature of the water at the inlet, and by $T$ the temperature of the water served to the users.
Both temperatures are expressed in $K$, and the water temperature is assumed to be uniform throughout the tank.
The water is heated by the metal plate, having temperature $T_m$, which is radiated by the flames resulting from the combustion of the gas, whose flow rate is denoted by $w_c$.
\begin{extendedonly}
	{Assuming the absence of heat loss, and that the flame heat is exchanged only via radiation,} the following model of the system can hence be formulated: 
	\begin{equation}\label{eq:example:plant}
	\scalemath{0.9}{
	\mathcal{P}: \,
	\begin{dcases}
 	    \dot{T} =\frac{1}{\rho_w A_t z_w}\left [ w \left ( T_{i} - T \right )+\frac{k_{lm} A_t}{c_w}\left ( T_{m }- T \right ) \right ] \\
 	   \dot{T}_m = \frac{1}{M_{m} c_{m}}\left [ -k_{lm} A_t \left ( T_{m} - T \right )+\sigma k_{f} w_{c} \left ( T_{f}^{4}-T_{m}^{4} \right ) \right ]
	\end{dcases}}.
	\end{equation}
	
	This model has one controllable input $u = [ w_c ]$, one output $y_p = [ T ]$, and two states $x_p = [ T, T_m ]^\prime$.
	Moreover, system \eqref{eq:example:plant} is affected by two disturbances, $d_p = [w, T_i]^\prime$, whose nominal values are reported, alongside the other parameters of the model, in Table \ref{tab:system_parameters}.
	The gas flow rate $w_c$ is also subject to saturation, i.e.
	\begin{equation}
 	   w_c \in [0.05, 0.18].
	\end{equation}
	
	\begin{table}[t]
		\centering
		\caption{Benchmark system parameters}
		\label{tab:system_parameters}
		\resizebox{\columnwidth}{!}{
		\begin{tabular}{clcc}
		\toprule
		Parameter  & Description & Value  & Units \\ \midrule
		$A_t$ & Tank's cross-section &$\frac{\pi}{4}$ & $m^2$ \\
		$\rho_{w}$ & Water's density & $997.8 $ & $\frac{kg}{m^3}$\\
		$c_w$ & Water's specific heat & $4180.0$ & $\frac{J}{kg \cdot K}$\\
		$M_{m}$ & Metal plate's mass & $617.32$ & $kg$\\
		$c_m$ & Metal's specific heat & $481.0$ & $\frac{J}{kg \cdot K}$\\
		$\sigma$ & Radiation coefficient & $5.67\times 10^{-8}$ & $\frac{W}{m^2 \cdot K^4}$ \\
		$k_{lm}$ & Heat exchange coefficient & $3326.4$ & $\frac{kg}{s^3 \cdot K}$ \\
		$T_f$ & Flame's temperature & $1200$ & $K$\\
		$k_f$ & Heat exchange coefficient & $8.0$ & $\frac{m^2 \cdot s}{kg}$\\
		$z_w$ & Water level & $2.0$ & $m$ \\
		$\bar{w}$ & Nominal water demand & $1.0$ & $\frac{kg}{s}$ \\
		$\bar{T}_i$ & Nominal inlet water temperature & $298$ & $K$ \\
		\bottomrule
		\end{tabular}}
	\end{table}
\end{extendedonly}
\begin{shortonly}
	Hence, the model has one controllable input $u = [ w_c ]$, one output $y_p = [ T ]$, and two states $x_p = [ T, T_m ]^\prime$, and it is affected by two disturbances, $d_p = [w, T_i]^\prime$.
	The equations of the process and the physical parameters are reported in the extended version of the paper \cite{bonassi2022extended}.
\end{shortonly}

A simulator of the benchmark system has been implemented in Simulink, so as to collect the training data and to test the proposed control architecture.

\subsection{NNARX model training}
In order to collect the data used to train the NNARX model of the plant \ifbool{extended}{\eqref{eq:example:plant}}{}, the simulator has been forced with a Multilevel Pseudo-Random Signal (MPRS), so as to properly excite the system.
The input-output data has been recorded with sampling time $\tau_s = 120 \, s$, for a total of $2500$ time steps.
According to the Truncated Back-Propagation Through Time (TBPTT) principle, $N_{t} = 120$ random subsequences of length $T_s = 400$, denoted by $(\bm{u}_{T_s}^{\{i\}}, \bm{y}_{p, T_s}^{\{i\}})$, with $i \in \mathcal{I}_t = \{ 1, ..., N_t \}$, have been extracted from the experiment data.
Other two shorter experiments have been performed, from which $N_v=30$ and $N_f=1$ subsequences have been extracted as validation and independent test datasets, respectively.
These two datasets are denoted by the set of indices $\mathcal{I}_v = \{ N_t + 1, ..., N_t + N_v \}$ and $\mathcal{I}_f = \{N_t + N_v + 1, ..., N_s \}$, respectively,  where $N_s = N_t + N_v + N_f$.

The training procedure has been conducted using PyTorch 1.9.
The adopted NNARX model features a single-layer ($M=1$) FFNN regression function with $30$ neurons, and the past $N=5$ input-output data has been used as regressors.
Following the guidelines of \cite{bonassi2022survey, bonassi2021nnarx}, the model has been trained by minimizing the simulation Mean Square Error (MSE) over the training set $\mathcal{I}_t$, i.e. the MSE between the NNARX open-loop prediction (given the input sequence $\bm{u}_{T_s}^{\{i\}}$) and the measured output sequence $\bm{y}_{p, T_s}^{\{i\}}$.
The $\delta$ISS property of the NNARX model has been enforced during training by including a suitable regularization term in the loss function \cite{bonassi2021nnarx}, so that Assumption \ref{ass:delta_iss} is satisfied.
The NNARX model has been trained for $1288$ epochs, when its modeling performances on the validation set stopped improving.

\begin{extendedonly}
	\begin{figure}[t]
		\centering
		\includegraphics[width=\ifbool{extended}{0.95}{0.8} \linewidth]{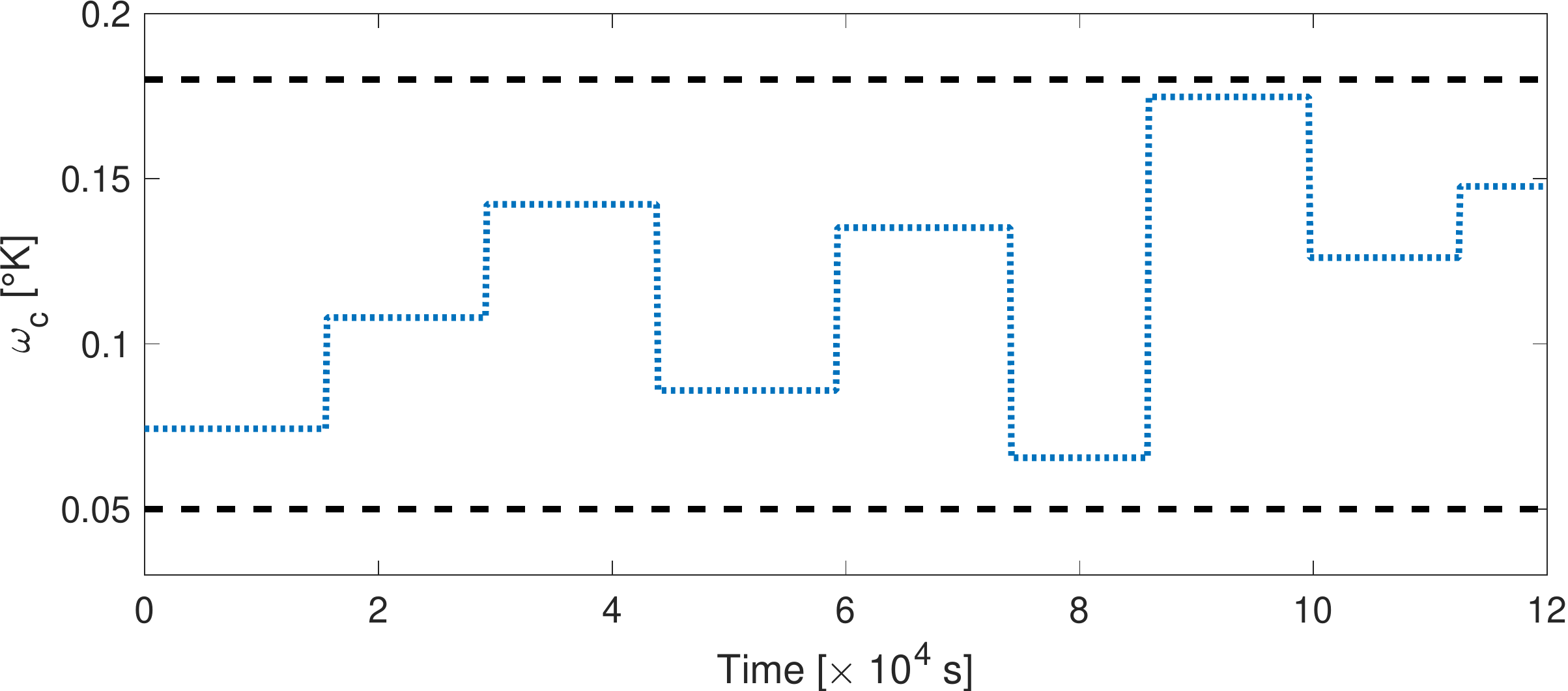}
		\caption{Input sequence used for testing the performances of the trained NNARX model.}
		\label{fig:test_input}
	\end{figure}
	Eventually, the model performances have been assessed on the independent test set. 
	In Figure \ref{fig:test_input} the test input sequence, $\bm{u}_{T_s}^{\{\iota \}}$, with $\iota \in \mathcal{I}_{f}$, is depicted, while in Figure \ref{fig:test_output} the corresponding open-loop simulation of the NNARX model is compared to the ground truth $\bm{y}_{p, T_s}^{\{\iota\}}$.
\end{extendedonly}
\begin{shortonly}
	In Figure \ref{fig:test_output} the open-loop simulation of the NNARX model corresponding to the input sequence  $\bm{u}_{T_s}^{\{\iota \}}$, with $\iota \in \mathcal{I}_{f}$, is compared to the ground truth $\bm{y}_{p, T_s}^{\{\iota\}}$.
\end{shortonly}
The performance have been quantified using the FIT [\%] index, defined as
\begin{equation}
	\text{FIT} = 100 \left( 1 -  \frac{\sum_{k=0}^{T_s} \| y_k^{\{ \iota \}} - y_{p, k}^{\{ \iota \}}\|_2}{\sum_{k=0}^{T_s}\| y_{p, k}^{\{ \iota \}} - y_{avg} \|_2} \right),
\end{equation}
where $y_k^{\{ \iota \}}$ indicates the output of the NNARX model \eqref{eq:nnarx:compact} fed by the input sequence $\bm{u}_{k}^{\{ \iota \}}$ and initialized in a random initial state, and $y_{avg}$ is the average of $y_{p,k}^{\{ \iota\}}$ over time.
The FIT scored by the trained model is $92.8 \%$, which indicates fair modeling performances of the NNARX model.

\begin{figure}[t]
	\centering
	\includegraphics[width=\ifbool{extended}{0.95}{0.8} \linewidth]{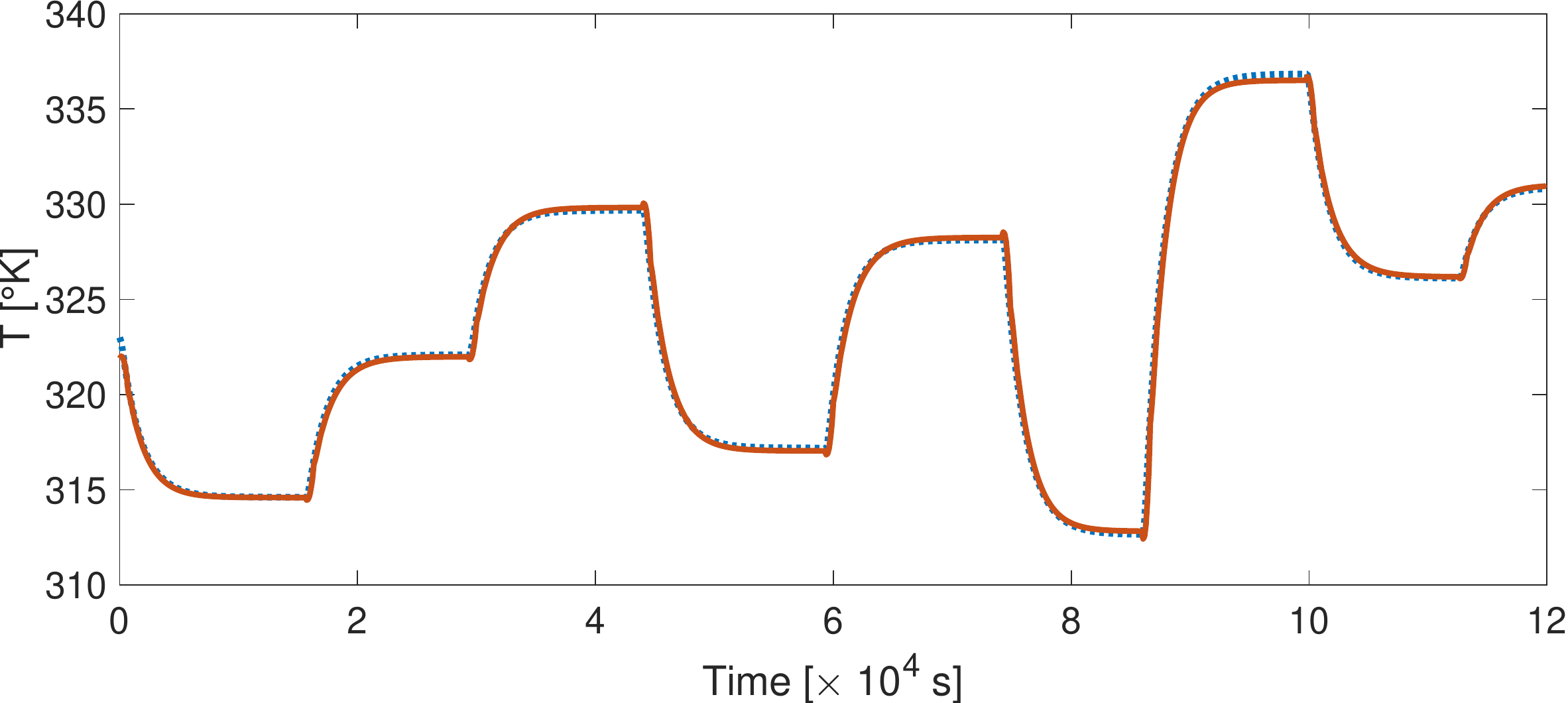}
	\caption{Modeling performances of the trained NNARX model: open-loop prediction (red line) versus ground truth (blue dotted line).}
	\label{fig:test_output}
\end{figure}

Having trained a $\delta$ISS NNARX model of the system, the control architecture proposed in Section \ref{sec:control} can be tested.

\begin{figure}[t]
	\centering
	\includegraphics[width=\ifbool{extended}{0.95}{0.8} \linewidth]{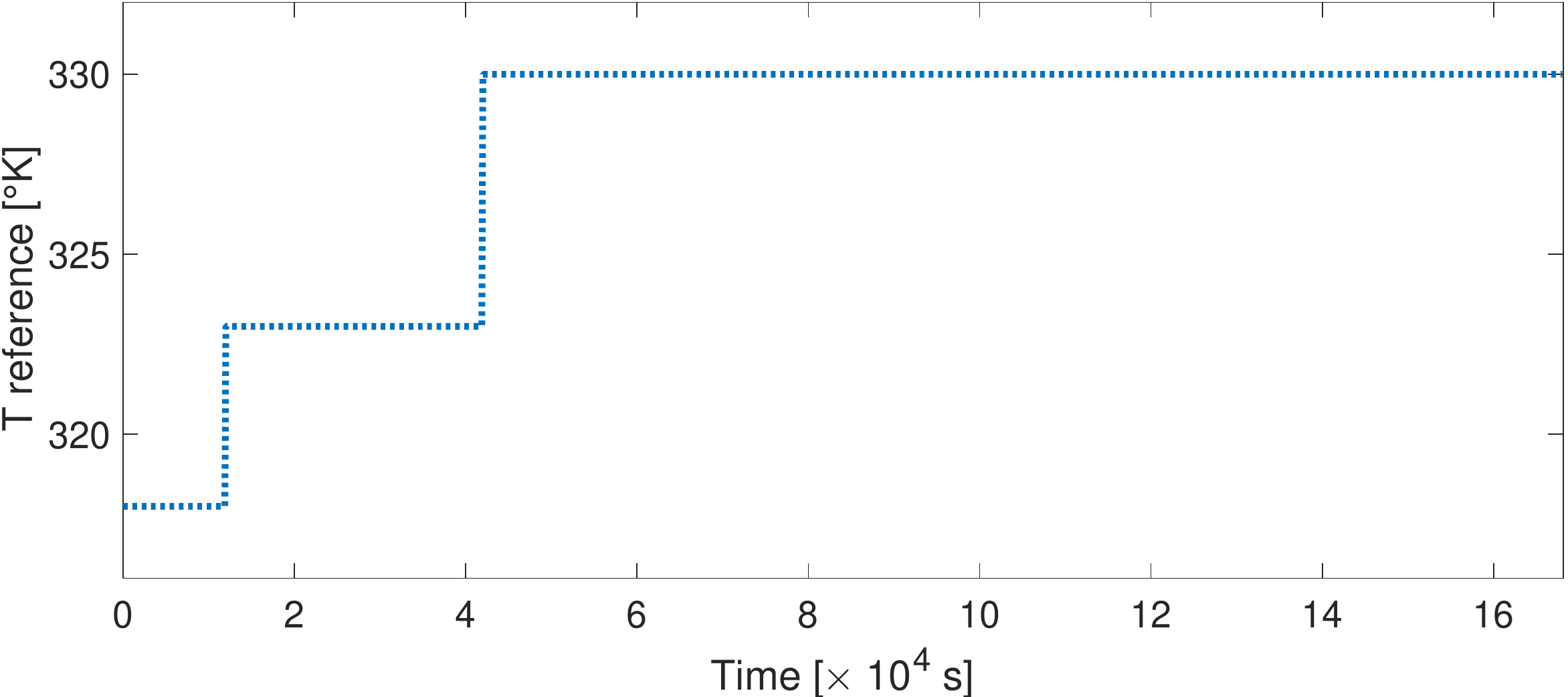}
	\vspace{-2mm}
	\caption{Piecewise-constant output reference trajectory.}
	\label{fig:reference}
	\vspace{2mm}
	\centering
	\includegraphics[width=\ifbool{extended}{0.975}{0.84}\linewidth]{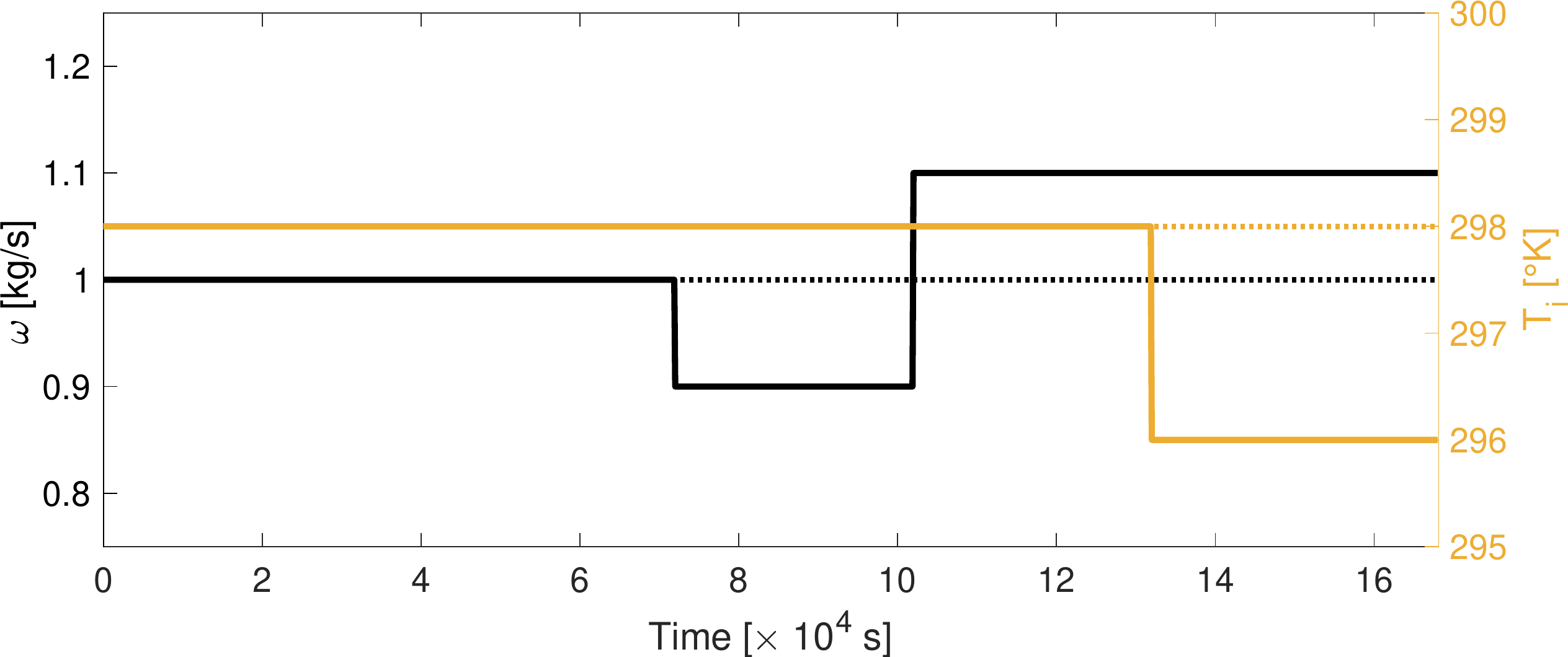}
	\vspace{-2mm}
	\caption{Disturbance applied to the system. Water demand $w$ (black continuous line) compared to its nominal value (black dotted line), and inlet water temperature $T_i$ (yellow continuous line) compared to its nominal value (yellow dotted line).}
	\label{fig:disturbances}
\end{figure}

\subsection{Control synthesis}
The proposed control architecture has been implemented, with the primary goal of tracking piecewise-constant water temperature references, while asymptotically rejecting possible (unmeasured) disturbances associated to variations of the water demand $w$ or the water temperature at the inlet $T_i$.
To assess the offset-free tracking capabilities and the robustness of the proposed architecture, the temperature reference depicted in Figure \ref{fig:reference} and the disturbances illustrated in Figure \ref{fig:disturbances} have been considered.

It should be noted that, since the output reference is piece-wise constant, at every change of the setpoint $\bar{y}$, the nominal equilibrium triplet $(\bar{x}, \bar{u}, \bar{y})$ of the corresponding nominal equilibrium of the augmented system $(\bar{\chi}, \bar{v}, \bar{\zeta})$ needs to be computed by solving \eqref{eq:control:equilibrium_def}.
Moreover, if the new setpoint is not in the neighborhood of the previous equilibrium, Assumption \ref{ass:linearized} should also be verified, and the integral action's gain $\mu$ should be recomputed according to Corollary \ref{prop:integrator}.
The prediction horizon of the adopted MPC law is $N_p=50$, while the weights are chosen as $R_e = 10$, $R_u = 0.1$, $Q_x = \text{diag}(R, R, R, R, R)$, $Q_\xi = 1$, and $Q_\theta = 10^{-5}$.

Concerning the tuning of the integral action gain $\mu$, Corollary \ref{prop:integrator} has been verified numerically. 
For the three setpoints considered in Figure \ref{fig:reference}, any gain $\mu \in (0, 0.251)$ results in a locally asymptotically stable augmented system. Specifically, $\mu = 0.14$ has been chosen.
 \smallskip

\noindent \emph{Alternative approach for comparison} 

In order to evaluate the performances of the proposed control architecture, the popular offset-free MPC strategy described in \cite{morari2012nonlinear} has also been implemented. 
In brief, this control strategy, henceforth named Disturbance Estimation Based MPC (DEB-MPC), requires to augment the NNARX model with a fictitious matched disturbance {on the input}, for the estimation of which a Moving Horizon Estimator (MHE) is designed.
Then, a standard state-feedback nonlinear MPC law is designed to stabilize the augmented model, featuring a prediction horizon $N_p=50$, and weights in line with those used for the proposed control architecture.
\smallskip

\noindent \emph{Closed-loop results}

The closed-loop output tracking performances achieved by the proposed approach are compared to those of DEB-MPC in Figure \ref{fig:closedloop_output}.
It is apparent that, while initially the DEB-MPC scheme is able to compensate the plant-model mismatch thanks to a reliable estimate of the fictitious matched disturbance, after the instant $t=7 \cdot 10^4 s$ -- when disturbances occur, see Figure \ref{fig:disturbances} -- the tracking performances of such control scheme are lost.
In contrast, the proposed control architecture is  able to attain zero tracking error, even in the presence of the severe disturbances that affect the system.
In Figure \ref{fig:closedloop_input}, the control variable requested by the two schemes is compared.
In both schemes the input constraints are fulfilled, although it can be observed that the control action issued by DEB-MPC is less moderate, mainly due to the transients of the disturbance estimator. 
{It should be noted that in DEB-MPC, the choice of the disturbance model is crucial to achieve satisfactory performance. 
Recent works proposing alternative disturbance estimation-based strategies, see \cite{tatjewski2020algorithms}, will thus be object of future investigations. 
}

\begin{figure}[t]
	\centering
	\includegraphics[width=\ifbool{extended}{0.95}{0.8} \linewidth]{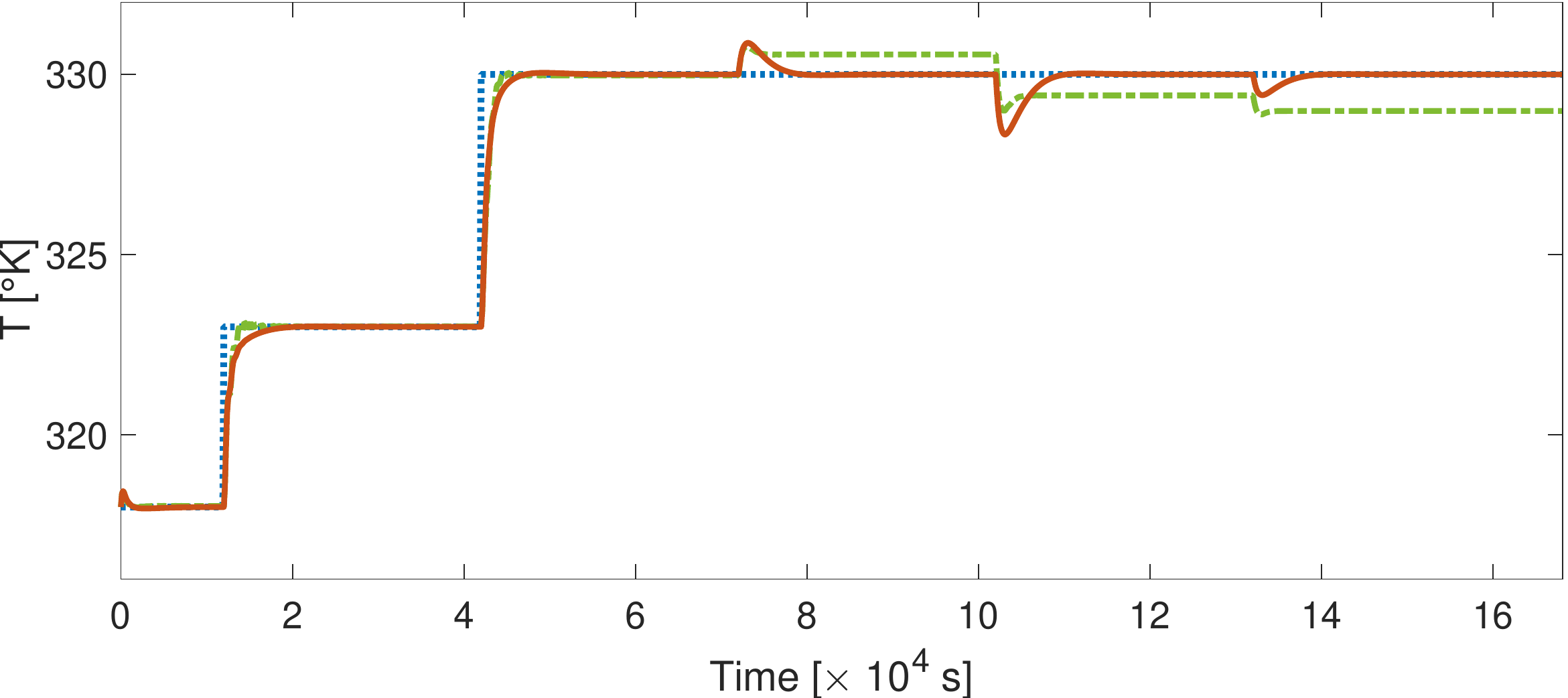}
		\vspace{-2mm}
	\caption{Closed-loop output tracking performances of the proposed approach (red continuous line) compared to that of the DEB-MPC (green dashed-dotted line). The setpoint is represented by the blue dotted line.
	}
	\label{fig:closedloop_output}
	\vspace{3mm}
	\centering
	\includegraphics[width=\ifbool{extended}{0.95}{0.8} \linewidth]{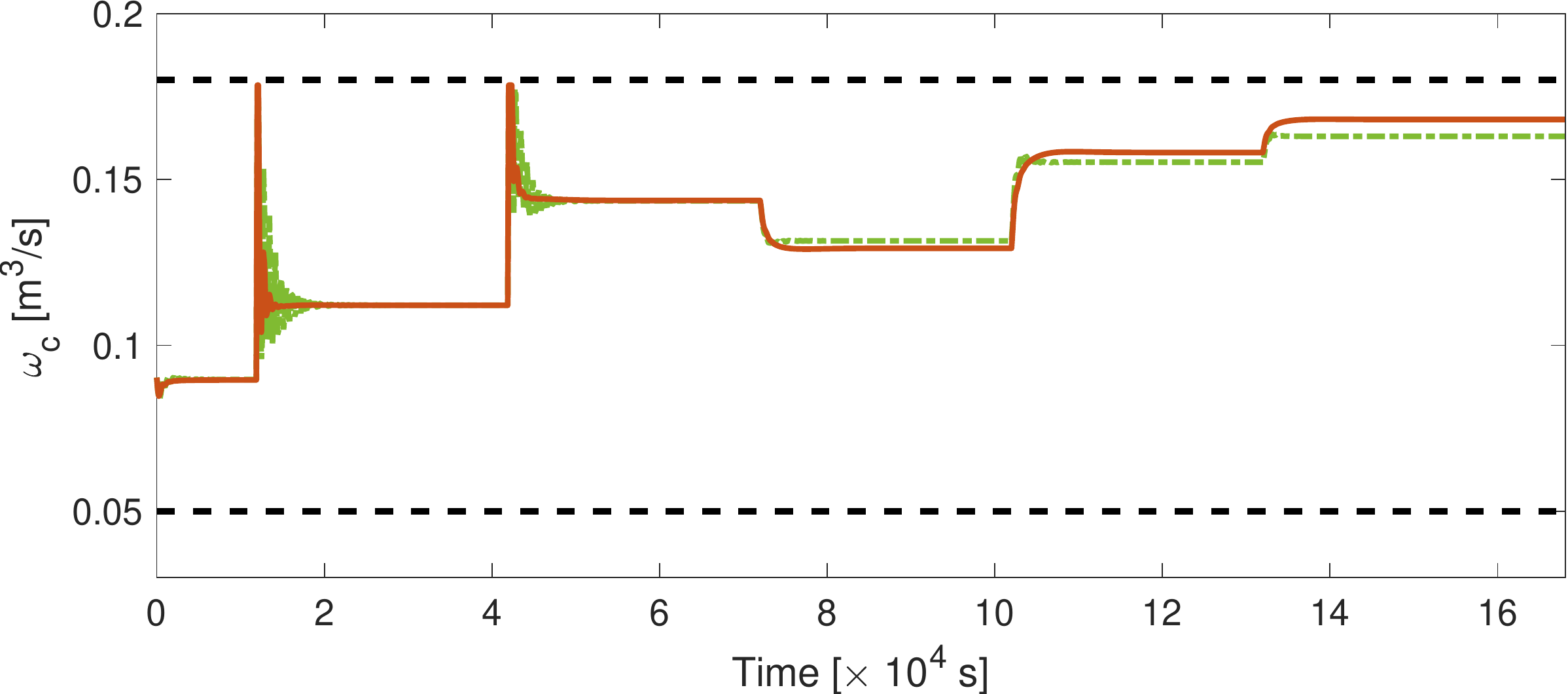}
		\vspace{-2mm}
	\caption{Control action of the proposed approach (red continuous line) compared to that of the DEB-MPC (green dashed-dotted line). }
	\label{fig:closedloop_input}
\end{figure}

\section{Conclusions} \label{sec:conclusions}
In this paper, a nonlinear Model Predictive Control (MPC) strategy is proposed to achieve offset-free tracking of constant references for system learned by Neural NARX (NNARX) models.
To this end, we proposed to augment the NNARX model with the integrator of the output tracking error and with a derivative action, and then to design a stabilizing MPC law for such augmented model.
The proposed control scheme attains nominal closed-loop stability and offset-free tracking capabilities.
The control law was tested on a water heating benchmark system, demonstrating satisfactory closed-loop performance and a good degree of robustness to the disturbances affecting the plant.

\section*{Acknowledgements}
\vspace{1mm}
\begin{minipage}[l]{0.075\textwidth}
	\includegraphics[width=\textwidth]{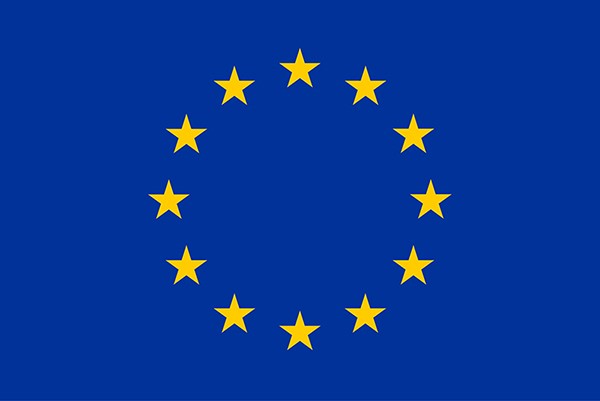}
	\label{fig:euflag}
\end{minipage}
\hspace{3mm}
\begin{minipage}[right]{0.34\textwidth}
	This project has received funding from the European Union’s Horizon 2020 research and innovation programme under the Marie Skłodowska-Curie grant agreement No. 953348
\end{minipage}
\vspace{2mm}

\bibliographystyle{IEEEtran}
\bibliography{narx_control}

 \begin{extendedonly}
 	\appendix
	\section{Proofs}
\subsection{Proof of Proposition \ref{prop:asymptotic_stability}} \label{appendix:proof_stability}
Let $\delta x_k$ and $\delta u_k$ be the displacement from the equilibrium point $(\bar{x}, \bar{u})$, i.e. $x_k = \bar{x} + \delta x_k$ and $u_k = \bar{u} + \delta u_k$.
The nonlinear system \eqref{eq:nnarx:compact} can be rewritten as
\begin{equation} \label{eq:proof:delta_system_full}
	\delta x_{k+1} + \bar{x} = f(\bar{x} + \delta x_k, \bar{u} + \delta u_k).
\end{equation}
Since the goal is to analyze the asymptotic stability of the linearized system, for simplicity it is assumed that $\delta u_k = 0$. 
It is worth noticing, however, that this proof could be easily extended to consider $\delta u_k \neq 0$, at the price of more involved computations.
Under this simplification, \eqref{eq:proof:delta_system} reads
\begin{equation} \label{eq:proof:delta_system}
	\delta x_{k+1} + \bar{x} = f(\bar{x} + \delta x_k, \bar{u}).
\end{equation}

System \eqref{eq:proof:delta_system} can be recast as its linearization plus the linerization error $\varepsilon$
\begin{equation} 
	\delta x_{k+1} = A_\delta \delta x_k + \varepsilon(\delta x_k),
\end{equation}
where
\begin{equation}
	A_\delta = \left. \frac{\partial f}{\partial x_k} \right\lvert_{\bar{x}, \bar{u}}.
\end{equation}
The goal is to show that the linear system
\begin{equation}\label{eq:proof:linear}
    \delta x_{k+1} = A_\delta \delta x_k
\end{equation}
is asymptotically stable.
Along the lines of \cite{khalil2002nonlinear}, the linearization error is first bounded as follows.

Consider the $i$-th state component, with $i \in \{ 1, ..., n \}$. In light of the Mean Value Theorem there exists $\tilde{x}$ between $\bar{x}$ and $\bar{x} + \delta x_k$ such that
\begin{equation}
\scalemath{0.9}{
    \begin{aligned}
    &f_i(\bar{x} + \delta x_k, \bar{u}) - f_i(\bar{x}, \bar{u}) = \left. \frac{\partial f_i(x, \bar{u})}{\partial x} \right\lvert_{\tilde{x}, \bar{u}}  \delta x_k \\
    & \,\, = \left.\frac{\partial f_i(x, \bar{u})}{\partial x} \right\lvert_{\bar{x}, \bar{u}} \!\! \delta x_k + \bigg[ \left. \frac{\partial f_i(x, \bar{u})}{\partial x} \right\lvert_{\tilde{x}, \bar{u}} \! - \! \left.\frac{\partial f_i(x, \bar{u})}{\partial x} \right\lvert_{\bar{x}, \bar{u}} \bigg] \delta x_k \\
    & \,\, = A_{\delta i} \delta x_k + \tilde{\varepsilon}_i(\delta x_k) \delta x_k
    \end{aligned}}
\end{equation}
Under the customary assumption that the gradient of $f(x, \bar{u})$ with respect to $x$ is Lipschitz continuous with Lipschitz constant $L_1$, it holds that
\begin{equation}
    \begin{aligned}
        \| \tilde{\varepsilon}_i(\delta x_k) \|^2_2 &\leq \left\| \left. \frac{\partial f_i(x, \bar{u})}{\partial x} \right\lvert_{\tilde{x}, \bar{u}} \! - \! \left.\frac{\partial f_i(x, \bar{u})}{\partial x} \right\lvert_{\bar{x}, \bar{u}} \right\|^2_2 \\
        & \leq L_1^2 \| \tilde{x} - \bar{x} \|_2^2 \leq L_1^2 \| \delta x_k \|_2^2.
    \end{aligned}
\end{equation}
Hence, being $\varepsilon(\delta x_k) = \tilde{\varepsilon}(\delta x_k) \delta x_k$,
 the linearization error can be bounded as
\begin{equation} \label{eq:proof:linearization_err_bound}
    \begin{aligned}
        \| \varepsilon(\delta x_k) \|_2 &\leq \| \delta x_k \|_2 \, \sqrt{\sum_{i=1}^n \big\| \tilde{\varepsilon}_i(\delta x_k ) \big\|_2^2 \, \|\delta x_k \|^2_2}  \\
        &\leq L_\varepsilon \| \delta x_k \|_2^2,
    \end{aligned}
\end{equation}
where $L_\varepsilon = L_1 \sqrt{n}$. 

At this stage, let us recall that the $\delta$ISS property implies the Global Asymptotic Stability (GAS) of any equilibrium. 
Indeed, recalling that $\delta u_k = 0$, from \eqref{eq:deltaiss:definition} it follows that
\begin{equation*}
    \| x_k - \bar{x} \|_2 \leq \beta(\| x_0 - \bar{x} \|_2, k).
\end{equation*}
Moreover, in light of the assumption on $\beta$, the exponential GAS property of any equilixbrium can be shown, since
\begin{equation} \label{eq:proof:exponential_gas}
    \| \delta x_k \|_2 \leq \rho  \| \delta x_0 \|_2 \lambda^k.
\end{equation}
This allows to invoke Theorem 5.8 of \cite{bof2018lyapunov} which, under the assumption of exponential GAS, guarantees the existence of a quadratic Lyapunov function $V(\delta x)$ for the nonlinear system \eqref{eq:proof:delta_system}.
That is, there exist positive constants $c_1$, $c_2$, $c_3$, $c_4$, such that
\begin{subequations}
    \begin{gather}
    c_1 \| \delta x_k \|_2^2 \leq V(\delta x_k) \leq c_2 \| \delta x_k \|_2^2, \label{eq:proof:lyapunov:bounds}\\
    V(A_\delta \delta x_{k} + \varepsilon(\delta x_k)) - V(\delta x_k) \leq -c_3 \| \delta x_k \|_2^2 \label{eq:proof:lyapunov:decreasing} \\
    \left\| \frac{\partial V(A_\delta \delta x_k + \varepsilon(\delta x_k))}{\partial \varepsilon} \right\|_2 \leq c_4 \| \delta x_k \|_2. \label{eq:proof:lyapunov:lipschitz}
    \end{gather}
\end{subequations}
The goal is to show that $V(\delta x_k)$ is also a Lyapunov function for the linear system \eqref{eq:proof:linear}.
To this end, let us add and subtract $V(A_\delta \delta x_k)$ from the left-hand side of \eqref{eq:proof:lyapunov:decreasing}, leading to
\begin{equation}\label{eq:proof:lyapunov:int1}
\begin{aligned}
    & V(A_\delta \delta x_{k}) - V(\delta x_k)  + \big[ V(A_\delta \delta x_{k} + \varepsilon(\delta x_k)) - V(A_\delta \delta x_{k}) \big] \\
    & \quad \leq -c_3 \| \delta x_k \|_2^2 
\end{aligned}
\end{equation}
In light of \eqref{eq:proof:lyapunov:lipschitz} and \eqref{eq:proof:linearization_err_bound}, $V(A_\delta \delta x_{k} + \varepsilon(\delta x_k)) - V(A_\delta \delta x_{k})$ can be bounded as
\begin{equation} \label{eq:proof:lyapunov:int2}
\begin{aligned}
    & \big\| V(A_\delta \delta x_{k} + \varepsilon(\delta x_k)) - V(A_\delta \delta x_{k}) \big\|_2  \\
    & \qquad \leq \left\| \frac{\partial V(A_\delta \delta x_k + \varepsilon(\delta x_k))}{\partial \varepsilon} \right\|_2 \, \| \delta x_k \|_2 \\
    & \qquad \leq c_4 L_\varepsilon \| \delta x_k \|_2^3.
\end{aligned}
\end{equation}

Owing to the bound \eqref{eq:proof:lyapunov:int2} and to the exponential GAS \eqref{eq:proof:exponential_gas}, recalling that $\lambda \in (0, 1)$, from \eqref{eq:proof:lyapunov:int1} it holds that
\begin{equation}
\begin{aligned}
    V(A_\delta \delta x_{k}) - V(\delta x_k) &\leq - c_3 \| \delta x_k \|_2^2 - c_4 L_\varepsilon \| \delta x_k \|^3_2  \\
    &\leq - c_3 \| \delta x_k \|_2^2 - \rho c_4 L_\varepsilon \| \delta x_0 \|_2 \| \delta x_k \|_2^2  \\
    &\leq - \big(c_3 - \rho c_4 L_\varepsilon \| \delta x_0 \|_2 \big) \| \delta x_k \|_2^2.
\end{aligned}
\end{equation}
Hence, there exist constants $c_5>0$ and $r_0 > 0$ such that, $\forall \delta x_0 \in \{ \delta x_0 : \| \delta x_0 \|_2 \leq r_0 \}$, 
\begin{equation*}
    V(A_\delta \delta x_{k}) - V(\delta x_k) \leq - c_5 \| \delta x_k \|_2^2.
\end{equation*}
Hence, the asymptotic stability of the linear system \eqref{eq:proof:linear} is proven by using $V(\delta x_k)$ as Lyapunov function, which implies that $A_\delta$ is Schur stable. $\hfill\blacksquare$ 
 \end{extendedonly}

\end{document}